\tikzstyle arrowstyle=[scale=1]
\tikzstyle directed=[postaction={decorate,decoration={markings,
    mark=at position .65 with {\arrow[arrowstyle]{latex}}}}]
\newtheorem{theorem}{Theorem}[section]
\newtheorem{proposition}[theorem]{Proposition}
\newcommand{\lp}{\left(}
\newcommand{\rp}{\right)}
\newcommand{\tH}{\tilde{H}}
\newcommand{\R}{\mathbb{R}}
\newcommand{\N}{\mathbb{N}}
\newcommand{\Nz}{\N_0}
\newcommand{\Z}{\mathbb{Z}}
\newcommand{\Mz}{{M_0}}
\newcommand{\C}{\mathbb{C}}
\newcommand{\Q}{\mathbb{Q}}
\newcommand{\cM}{\mathcal{M}}
\newcommand{\cZ}{\mathcal{Z}}
\newcommand{\Klam}{K_\lambda} 
\newcommand{\Dlam}{D_\lambda}
\newcommand{\Rlam}{R_\lambda}
\newcommand{\LM}[1]{L_{M,#1}}
\newcommand{\LMn}{\LM{n}}
\newcommand{\hH}{\hat{H}}
\newcommand{\HM}{H_M}
\newcommand{\Hk}{\operatorname{Hk}}
\newcommand{\hk}{\operatorname{hk}}
\newcommand{\Slam}{S_\lambda} 
\newcommand{\Psilam}{\Psi_\lambda}
\newcommand{\Philam}{\Phi_\lambda}
\newcommand{\Mlam}{M_{\lambda}}
\newcommand{\lspan}{\operatorname{span}}
\newcommand{\filledbox}{\tikz{\fill circle (3pt)}}
\newcommand{\emptybox}{\tikz{\draw circle (3pt)}}
\newcommand{\Wr}{\operatorname{Wr}}
\newcommand{\bt}{{\boldsymbol{t}}}
\newcommand{\flip}{f}
\newcommand{\bX}{\mathbf{X}}
\newcommand{\bV}{\mathbf{V}}
\begin{document}

\title{Extended Coherent  States}
\author{Z.~M.~McIntyre}
\address{Department of Physics, McGill U}
\email{zoe.mcintyre@mail.mcgill.ca}
\author{A. Kasman}
\address{Department of
  Mathematics, College of Charleston}
\email{kasmana@cofc.edu}
\author{ R. Milson}
\address{Department of Mathematics and Statistics, Dalhousie
  University}
\email{rmilson@dal.ca}
%
%

\begin{abstract}
  Using the formalism of Maya diagrams and ladder operators,
  we describe the algebra of annihilating operators for the class of
  rational extensions of the harmonic oscillator.  This allows us to
  construct the corresponding coherent state in the sense of Barut and
  Girardello.  The resulting time-dependent function is an exact
  solution of the time-dependent Schr\"odinger equation and a joint
  eigenfunction of the algebra of annihilators.  Using an argument
  based on Schur functions, we also show that the newly exhibited
  coherent states asymptotically minimize position-momentum
  uncertainty.
\end{abstract}

\maketitle
%
%

\section{Introduction}

Coherent states are quantum-mechanical states whose dynamics resemble
the behaviour of classical oscillators \cite{GL63,KS85}. The
mathematical treatment of coherent states has opened a number of
interesting new directions in mathematical physics \cite{BG71,PE86}.
In particular, there has been a long standing interest in describing
coherent states related to exactly solvable potentials obtained via
the method of supersymmetric quantum mechanics (SUSYQM)
\cite{FHRO07,BGBM99} Recently, such constructions have been applied to
rational extensions of various exactly solvable potentials
\cite{GDH21} -- the latter are closely related to exceptional
operators and exceptional orthogonal polynomials \cite{GGM13,HHMYZ18}.
The key methodology is showing that these systems have non-trivial
algebras of ladder operators\cite{MQ13}; the coherent states may then
be defined as eigenstates of lowering/annihilator operators.  A recent
article has extended the ladder-operator approach to non-rational
extensions of solvable potentials \cite{CAFMC22}.

In this article, we focus on the supersymmetric partners of the
harmonic oscillator consisting of a modification of the quadratic
potential by a rational function that vanishes at infinity --- hence
the name \emph{rational extension}.  The rational extensions of the
harmonic oscillator are known to possess non-trivial algebras of
ladder operators.  This observation has been successfully exploited in
the study superintegrable systems \cite{MQ13,CP17} and rational
solutions of Painlev\'e equations \cite{BCAF15,GGM21}.

Recent works established the bispectral character of rational
extensions \cite{KM20} and characterized their algebra of ladder
operators using Maya diagrams \cite{GGMM20}.  In the present article,
we combine these two approaches to show that all such rational
extensions admit a natural notion of a coherent state as a joint
eigenvalue of the commutative subalgebra of lowering/annihilator
ladder operators --- we name these objects \emph{extended coherent
  states} (ECS).  The mathematical description of the ECS involves a
certain reduction of the $\tau$-function for the rational solutions of
the KP equation, as is is fully explained in \cite{KM20}, which allows
a convenient description utilizing a technique called a Miwa shift; as
shown below in equation \eqref{eq:Psilamdef}.  As an immediate
corollary we obtain the result that the ECS asymptotically saturate
the Heisenberg position-momentum uncertainty bound as do the the
canonical coherent states (CCS) of the harmonic oscillator
\cite{Sch26} first described by Schr\"odinger.

This article is organized as follows.  Section \ref{sect:prelim}
gathers the necessary background on Maya diagrams and related notions
in combinatorics and integrable systems.  Section \ref{sect:ho}
reviews Hermite polynomials, the harmonic oscillator, the CCS, and
rational extensions.  The key result here is the Miwa-shift formula
\eqref{eq:Psilamdef} for the generating function of the bound states.
Finally, Section \ref{sect:ECS} introduces the ECS as a modification
of the above generating function, defines the annihilator algebra,
establishes the joint eigenvalue properties \eqref{eq:PhilamTM}
\eqref{eq:LqPhial}, and demonstrates the asymptotic minimization of
uncertainty.  The section concludes with an explicit example.
\section{Preliminaries}
\label{sect:prelim}
\subsection{Partitions and Maya diagrams}
A \textit{partition} of a natural number $N\in \Nz$ is a
non-increasing integer sequence $\{ \lambda_i \}_{i\ge 1}$ such that
$ |\lambda| := \sum_i \lambda_i = N$. Implicit in this definition is
the assumption that $\lambda_i=0$ for $i$ sufficiently large.  The
length $\ell$ of $\lambda$ is the number of non-zero elements of the
sequence.  The Young diagram corresponding to $\lambda$ is an
irregular tableaux consisting of $\lambda_i$ cells in rows
$i=1,\ldots, \ell$. Formally,
\[ Y_\lambda = \{ (i,j) \in \N^2\colon 1\le i \le \ell,\; 1\le j \le
  \lambda_i \},\] Note that, unlike the usual convention, we place the
longest row of the Young diagram at the bottom.

The hook
\[ \Hk_\lambda(i,j) = \{ (i,k)\in Y_\lambda \colon j\le k \} \cup \{
  (k,j)\in Y_\lambda \colon i\le k \} \] is the set of cells
connecting cell $(i,j)$ to the rim of the diagram.  The hooklength
$\hk_\lambda(i,j)$ is the cardinality of hook $(i,j)\in
Y_\lambda$. The number
\begin{equation}
  \label{eq:dlamdef}
  d_\lambda = \frac{N!}{\prod_{(i,j)\in Y_\lambda}  \hk_\lambda(i,j) }
\end{equation}
counts the number of standard Young tableaux of shape $\lambda$ and
 corresponds to the dimension of an irreducible representation of the symmetric
group $\mathfrak{S}_N$.

\begin{figure}
  \centering
  \begin{tikzpicture}[scale=0.8]
    \draw[step=1cm,black,very thick] (0,0) grid (2,5);
    \draw[step=1cm,black,very thick] (2,0) grid (4,3);
    \draw[step=1cm,black,very thick] (4,0) grid (5,2);
    \path (0.5,0.5) node {9}
    ++(1,0) node {8}
    ++(1,0) node {5}
    ++(1,0) node {4}
    ++(1,0) node {2};
    \path (0.5,1.5) node {8}
    ++(1,0) node {7}
    ++(1,0) node {4}
    ++(1,0) node {3}
    ++(1,0) node {1};
    \path (0.5,2.5) node {7}
    ++(1,0) node {6}
    ++(1,0) node {2}
    ++(1,0) node {1};
    \path (0.5,3.5) node {3} ++(1,0) node {2};
    \path (0.5,4.5) node {2} ++(1,0) node {1};
  \end{tikzpicture}
  \caption{The Young diagram and corresponding hooklengths for the
    partition $(5,5,4,2,2)$.}
\end{figure}

Closely related to partitions is a concept called a \emph{Maya
  diagram}.  We say that $M\subset \Z$ is a Maya diagram if $M$
contains a finite number of positive integers and excludes a finite
number of negative integers.
Let $\cM$ denote the set of all Maya diagrams.  For $M\in\cM$ and
$n\in\Z$, $M+n=\{m+n:m\in M\}$ is also a Maya diagram.  Thus, $\cM$
admits a natural $\Z$ action by translations.

We will refer to the equivalence class $M/\Z$ as an \emph{unlabelled}
Maya diagram.  Intuitively, an unlabelled Maya is a horizontal
sequence of filled $\filledbox$ and empty $\emptybox$ states beginning
with an infinite segment of $\filledbox$ and terminating with an
infinite segment of $\emptybox$.  Here, $\filledbox$ in position $m$
is taken to indicate membership $m\in M$.  A choice of origin serves
to convert an unlabelled Maya diagram to a subset of $\Z$.  The index
of a Maya diagram $M\in \cM$ is the integer
\[ \sigma_M:= \#\{ m\in M \colon m\geq 0\}- \#\{ m\notin M \colon m<0
  \};\] i.e., the difference between the number of $\filledbox$ to
the right of the origin and the number of $\emptybox$ to the left of
the origin.  Evidently, $\sigma_{M+n} = \sigma_M+n$.

There is a natural bijection between the set of partitions and the set
of unlabelled Maya diagrams. To visualize this bijection,
represent a filled state with a unit downward arrow and an empty
state with a unit right arrow.  As can be seen in Figure
\ref{fig:bentmaya}, the resulting ``bent'' Maya diagram traces out the
boundary of the Young diagram of the corresponding partition
$\lambda$; see \cite{GGM18} for more details.

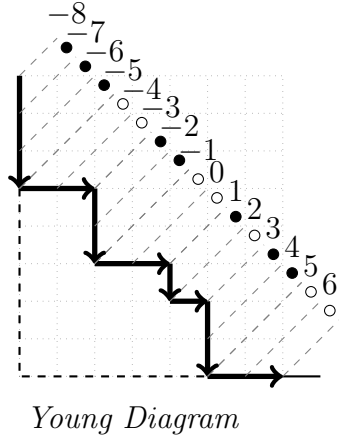
\begin{figure}
  \centering
  \begin{tikzpicture}[scale=0.5]

\draw[step=1cm,gray,dotted,very thin] (0,0) grid (7,8);

\draw[dashed,thick] (0,5) -- +(0,-5) -- +(5,-5);

\draw[line width=2pt,->]    (0,8)  -- ++(0,-1)-- ++(0,-1)-- ++(0,-1);
\draw[line width=2pt,->]    (0,8-3)  -- ++(1,0) -- ++(1,0);
\draw[line width=2pt,->]    (0+2,8-3)  -- ++(0,-1)-- ++(0,-1);
\draw[line width=2pt,->]    (0+2,8-5)  -- ++(1,0) -- ++(1,0);
\draw[line width=2pt,->]    (0+4,8-5)  -- ++(0,-1);
\draw[line width=2pt,->]    (0+4,8-6)  -- ++(1,0 );
\draw[line width=2pt,->]    (5,2)  -- ++(0,-1)-- ++(0,-1);
\draw[line width=2pt,->]    (5,0)  -- ++(2,0);

\draw[thick] (5,0) -- ++(3,0);

\foreach \y  in {5,...,8} {
\draw[dashed,color=gray] (0,\y) -- (5-\y/2,5+\y/2);
};  

\foreach \x  in {0,...,2} {
\draw[dashed,color=gray] (\x,5) -- (2.5+\x/2,7.5-\x/2);
};  

\foreach \y  in {3,...,5} {
\draw[dashed,color=gray] (2,\y) -- (6-\y/2,4+\y/2);
};  

\foreach \x  in {2,...,4} {
\draw[dashed,color=gray] (\x,3) -- (3.5+\x/2,6.5-\x/2);
};  

\foreach \y  in {2,...,3} {
\draw[dashed,color=gray] (4,\y) -- (7-\y/2,3+\y/2);
};  

\foreach \y  in {0,...,2} {
\draw[dashed,color=gray] (5,\y) -- (7.5-\y/2,2.5+\y/2);
};  

\foreach \x  in {5,...,7} {
\draw[dashed,color=gray] (\x,0) -- (5+\x/2,5-\x/2);
};

\draw[fill=black]
    (1.25,8.75) circle (4pt) node[above=4pt] {$-8$}
  ++ (0.5,-0.5)  circle (4pt) node[above=4pt] {$-7$}
  ++ (0.5,-0.5)  circle (4pt) node[above=4pt] {$-6$};

\draw  (1.25,8.75) ++ (1.5,-1.5)
  circle (4pt) node[above=4pt] {$-5$}
 ++ (0.5,-0.5)  circle (4pt) node[above=4pt] {$-4$};

\draw[fill=black]
    (1.25+2.5,8.75-2.5) circle (4pt) node[above=4pt] {$-3$}
  ++ (0.5,-0.5)  circle (4pt) node[above=4pt] {$-2$};

\draw  (1.25+3.5,8.75-3.5)
  circle (4pt) node[above=2pt] {$-1$}
 ++ (0.5,-0.5)  circle (4pt) node[above=2pt] {$0$};

\draw[fill=black]  (1.25+4.5,8.75-4.5)
  circle (4pt) node[above=2pt] {$1$};

\draw  (1.25+5,8.75-5)
  circle (4pt) node[above=2pt] {$2$};

\draw[fill=black]  (1.25+5.5,8.75-5.5)
  circle (4pt) node[above=2pt] {$3$}
++ (0.5,-0.5)  circle (4pt) node[above=2pt] {$4$};

\draw  (1.25+6.5,8.75-6.5)
  circle (4pt) node[above=2pt] {$5$}
 ++ (0.5,-0.5)  circle (4pt) node[above=2pt] {$6$};

\path (3,-0.5) node[font=\itshape,anchor=north,align=center]
{Young Diagram};


\end{tikzpicture}

\caption{The bent Maya diagram with index set $K=\{-5,-4,-1,1,3,4 \}$
  is the rim of the Young diagram of the corresponding partition
  $\lambda=(5,5,4,2,2)$.}
\label{fig:bentmaya}
\end{figure}

\noindent
The bijection may also be described as $\lambda\to \Mlam/\Z$ where $\lambda$ is a  partition and
\begin{equation}
  \label{eq:Mlam}
  \Mlam = \{ \lambda_i-i \}_{i\in \N}
\end{equation}
The bijection claim is justified by showing that $\sigma_{\Mlam}=0$
and that every equivalence class in $\cM/\Z$ contains a unique Maya
diagram with zero index.

The flip $\flip_{k}$ at position $k\in\mathbb{Z}$ is the involution
$\flip_k:\mathcal{M}\rightarrow\mathcal{M}$ defined by
\begin{equation}
\flip_k:M\mapsto \begin{cases}
M\cup\{k\},& k\notin M\\
M\setminus\{k\},&k\in M
\end{cases}.
\end{equation}
\noindent
In the event that $k\notin M$, the flip $\flip_k$ is said to act on $M$
by a state-deleting transformation $\emptybox\rightarrow\filledbox$,
while in the opposite scenario ($k\in M$), it is said to act by a
state-adding transformation $\filledbox\rightarrow\emptybox$.

Let $\cZ$ denote the set of all finite subsets of $\Z$. For a finite
set of integers $K=\{ k_1,\ldots, k_p \}\in \cZ$ we define the
corresponding multi-flip to be the transformation
$\flip_K:\cM\rightarrow\cM$ defined according to
\begin{equation}
\flip_K(M)=(\flip_{k_1}\circ\dotsm\circ \flip_{k_p})(M).
\end{equation}

Observe that multi-flips are also involutions. This means that Maya
diagrams together with multifips have the natural structure of a
complete graph $(\cM,\cZ)$.  The edge connecting $M_1, M_2\in \cM$ is
the unique multi-flip $\flip_K$ such that $\flip_K(M_1)=M_2$
and $\flip_K(M_2)=M_1$. The corresponding $K\in \cZ$ is given as
the symmetric set difference
\begin{equation}
\label{edge}
K=M_1\ominus M_2=M_2\ominus M_1
\end{equation}

Since $(\cM,\cZ)$ is a complete graph, we can define a bijection
$\cZ\rightarrow \mathcal{M}$ given by $K\mapsto \flip_K(\Mz)$, where
$  \Mz = \{ m \in \Z \colon m<0 \}$
denotes the trivial Maya diagram, and where $K=M\ominus M_0$ is the \emph{index set} of the Maya diagram $M$.

The hooklength formula \eqref{eq:dlamdef} can be re-expressed in terms
of a Maya diagrams and index sets as follows.  Let $\lambda$ be a
partition of length $\ell$. Define $\Klam$ to be the index set of
$\Mlam+\ell$. Then, $k_i= \lambda_i-i+\ell,\; i=1,2,\ldots,\ell$ is
the decreasing enumeration of $\Klam$, and
\[\frac{N!}{d_\lambda}= \prod_{i,j} \hk_\lambda(i,j) = \frac{\prod_i
    k_i!}{\prod_{i<j}(k_i-k_j)}.\]

\subsection{Vertex operators and Schur functions}
For $k\in \Nz$, define the ordinary Bell polynomials
$B_k(t_1,\ldots, t_k) \in \Q[t_1,\ldots, t_k]$ as the
coefficients of the power generating function
\begin{equation}
  \label{eq:Bgf}
    \exp\left(\sum_{k=1}^\infty t_k z^k\right)
    = \sum_{k=0}^\infty B_k(t_1,\ldots, t_k) z^k,
\end{equation}
where $ \bt=(t_1,t_2,\ldots)$.  The multinomial formula implies that
\begin{equation}
  \label{eq:Bksum}
  \begin{aligned} B_k(t_1,\ldots, t_k) &= \sum_{ \Vert \mu\Vert=k} \frac{t^{\mu_1}_1}{\mu_1!}
\frac{t^{\mu_2}_2}{\mu_2!}  \cdots \frac{t^{\mu_{k}}_{k}}{\mu_{k}!},\qquad
\Vert \mu \Vert = \mu_1 + 2\mu_2 + \cdots + {k} \mu_{k}\\ &=
\frac{t_1^k}{k!} + \frac{t_1^{k-2}t_2}{(k-2)!} + \cdots + t_{k-1} t_1
+ t_k.
  \end{aligned}
\end{equation}
For a partition $\lambda$ of $N$, define the Schur function
$\Slam(t_1,\ldots, t_N)\in \Q[t_1,\ldots, t_N]$ to be the multivariate
polynomial
 \begin{equation}
   \label{eq:Slamdef} \Slam =
   \det(B_{m_i+j})_{i,j=1}^\ell, \quad m_i = \lambda_i-i,
 \end{equation}
 where $B_k=0$ when $k<0$.  Moreover, since
 \[ \partial_{t_i} B_j(t_1,\ldots, t_j) = B_{j-i}(t_1,\ldots,
   t_{j-i}),\quad j\geq i,\] we may re-express \eqref{eq:Slamdef} in
 terms of a Wronskian determinant,
\begin{equation}
  \label{eq:Slamwronsk}
  \Slam = \Wr[B_{m_\ell+\ell},\ldots, B_{m_1+\ell}],
\end{equation}
where the Wronskian is taken with respect to $t_1$.

Let $\bX_m= \bX_m(\bt,\partial_\bt),\; m\in \Z$, be the operators
defined by the generating function
\begin{equation}
  \label{eq:bVdef}
\begin{aligned}
  \bV(\bt,\partial_\bt,z)
  &=\exp\lp \sum_{k=1}^\infty  t_k z^k \rp \exp\lp \sum_{j=1}^\infty
    - j^{-1}\partial_{t_j} z^{-j} \rp \\
  &= \sum_{m=-\infty}^\infty \bX_m(\bt,\partial_\bt)    z^m.
\end{aligned}
\end{equation}
Expanding the above formulas gives
\begin{align}
  \label{eq:Xm+}
  \bX_m &= \sum_{j=0}^\infty B_{j+m}( t_1,\ldots,  t_k) B_j\lp
          \partial_{t_1},\ldots,  j^{-1}
          \partial_{t_j}\rp,\; m\ge 0;\\  
  \label{eq:Xm-}
  \bX_m &= \sum_{j=0}^\infty B_{j}( t_1,\ldots,  t_k) B_{j-m}\lp
          -\partial_{t_1},\ldots, - j^{-1}          \partial_{t_j}\rp,\; m<0.
\end{align}
It can be shown that the above operators obey the 
fundamental relation
\begin{align}
  \label{eq:XmXn}
  &\bX_m \bX_n + \bX_{n-1} \bX_{m+1} = 0.
\end{align}
Despite the fact that the $\bX_m(\bt,\partial_\bt)$ are differential
operators involving infinitely many variables, they have a
well-defined action on polynomials.  In particular, when applied to
Schur functions, they function as multi-variable raising operators.
\begin{proposition}
  \label{prop:SlamXl}
  For every partition $\lambda$ of length $\ell$, we have
  \begin{equation}
    \label{eq:SlamXlam}
    \Slam = \bX_{\lambda_1} \cdots \bX_{\lambda_{\ell}} 1,
  \end{equation}
  where $1$ is the Schur function corresponding to the trivial partition.
\end{proposition}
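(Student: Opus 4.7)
The plan is induction on the length $\ell$ of $\lambda$. The base case $\ell=0$ is immediate since the empty partition gives $\Slam=1$ and the empty product of operators equals the identity. For $\ell=1$, direct inspection of \eqref{eq:Xm+} shows $\bX_m\cdot 1 = B_m(t_1,\ldots,t_m)$ (only the $j=0$ term survives after applying to the constant $1$), which agrees with $S_{(m)} = B_m$ from \eqref{eq:Slamwronsk}.

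For the inductive step, set $\mu = (\lambda_2,\ldots,\lambda_\ell)$, a partition of length $\ell-1$. By the inductive hypothesis $\bX_{\lambda_2}\cdots\bX_{\lambda_\ell}\cdot 1 = S_\mu$, so the claim reduces to
\[
\bX_{\lambda_1} S_\mu = \Slam.
\]
Comparing the Wronskians from \eqref{eq:Slamwronsk}, $\Slam$ is obtained from $S_\mu$ by appending one new column whose top entry is $B_{\lambda_1+\ell-1}$. The task is therefore to show that $\bX_{\lambda_1}$ realizes this ``append a row'' operation on the Jacobi--Trudi determinant.

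To carry this out, I would exploit the contour-integral form of the vertex operator equivalent to \eqref{eq:bVdef},
\[
\bX_m f(\bt) = \oint \frac{dz}{2\pi i\, z^{m+1}}\, \exp\!\Bigl(\sum_k t_k z^k\Bigr)\, f\bigl(\bt - [z^{-1}]\bigr),
\]
where $[z^{-1}] = (z^{-1}, z^{-2}/2, z^{-3}/3,\ldots)$ and the second exponential factor in \eqref{eq:bVdef} is recognized as the Miwa-shift operator. The key elementary identity is the single-column Miwa formula $B_n(\bt - [z^{-1}]) = B_n(\bt) - z^{-1} B_{n-1}(\bt)$, which follows from multiplying \eqref{eq:Bgf} by $(1-w/z)$. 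Applying this to each column of the Jacobi--Trudi matrix for $S_\mu$ and collecting powers of $z^{-1}$, one obtains a Laurent expansion. Multiplying by $\exp(\sum_k t_k z^k) = \sum_j B_j(\bt) z^j$ and extracting the residue at $z^{\lambda_1}$ produces, after recognizing the resulting sum as a cofactor expansion, exactly the Jacobi--Trudi determinant for $\Slam$.

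The main obstacle is the combinatorial bookkeeping: the Miwa-shifted Wronskian spawns many terms, and one must verify that every cross-term in which the putative ``new'' column coincides with a column already present in $S_\mu$ vanishes by antisymmetry, leaving only the genuine new column $(B_{\lambda_1+\ell-1},B_{\lambda_1+\ell-2},\ldots,B_{\lambda_1})$. A conceptually cleaner alternative is to invoke the boson--fermion correspondence, under which $\bX_m$ is the bosonic image of a fermionic creation operator and its action on $S_\mu$ corresponds to prepending an element to the Maya diagram of $\mu$---precisely the combinatorial operation sending $\mu$ to $\lambda$. Either route gives the identity $\bX_{\lambda_1}S_\mu=\Slam$ and completes the induction.
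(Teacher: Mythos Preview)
The paper does not actually supply a proof of this proposition; it simply refers the reader to \cite[Appendix A]{NY99}.  Your inductive strategy, reducing to the single identity $\bX_{\lambda_1}S_\mu=\Slam$ and attacking it via the Miwa shift $B_n(\bt-[z^{-1}])=B_n-z^{-1}B_{n-1}$, is a standard and correct route---indeed it is essentially the computation Noumi and Yamada carry out.

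One remark on the ``combinatorial bookkeeping'' you flag as the obstacle: it becomes painless once organized correctly.  Writing the Jacobi--Trudi matrix for $S_\mu$ with columns $C_1,\ldots,C_{\ell-1}$ (where $C_j$ has entries $B_{\mu_i-i+j}$) and letting $C_0$ be the phantom column with entries $B_{\mu_i-i}$, the Miwa shift replaces each $C_j$ by $C_j-z^{-1}C_{j-1}$.  Expanding by multilinearity, the only surviving terms come from the choices $S=\{1,\ldots,k\}$ for $k=0,\ldots,\ell-1$ (any other pattern repeats a column), giving
\[
  S_\mu(\bt-[z^{-1}])=\sum_{k=0}^{\ell-1}(-z^{-1})^k\det(C_0,\ldots,\widehat{C_k},\ldots,C_{\ell-1}).
\]
Multiplying by $\sum_n B_n z^n$ and extracting the coefficient of $z^{\lambda_1}$ yields $\sum_k (-1)^k B_{\lambda_1+k}\det(C_0,\ldots,\widehat{C_k},\ldots,C_{\ell-1})$, which is exactly the cofactor expansion of the $\ell\times\ell$ Jacobi--Trudi determinant for $\Slam$ along its first row.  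So the ``antisymmetry cancellations'' are automatic once you expand column-wise rather than entry-wise, and no appeal to the boson--fermion correspondence is needed.
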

\noindent
The proof of \eqref{eq:XmXn}--\eqref{eq:SlamXlam} can be found in
\cite[Appendix A]{NY99}. As an immediate corollary we obtain the
following two results \cite{KM20}.
\begin{proposition}\label{prop:XmSlam}
  Let $\lambda$ be a partition, $\Mlam$ the Maya diagram as per
  \eqref{eq:Mlam}.
  For $m\notin \Mlam$ let $m\triangleright\lambda$ denote the
  partition
  \begin{equation}
    \label{eq:lmlist}
    \lambda_1 -1, \ldots , \lambda_j -1, m+j, \lambda_{j+1},
    \lambda_{j+2},\ldots,
  \end{equation}
  where $j$ is the smallest natural number such that
  $m+j \geq \lambda_{j+1}$. Then,
  \begin{equation}
    \label{eq:Xmact}
    \bX_m  \Slam =
    \begin{cases} 
      (-1)^{\# \{ k\in \Mlam : k> m \}} S_{m\triangleright\lambda} &
      \text{ if } m\notin \Mlam\\
      0 & \text{ if } m\in \Mlam
    \end{cases}.
  \end{equation}
\end{proposition}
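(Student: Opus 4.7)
The plan is to combine Proposition~\ref{prop:SlamXl} with the fundamental commutation \eqref{eq:XmXn}, rewritten as $\bX_a\bX_b=-\bX_{b-1}\bX_{a+1}$, to push $\bX_m$ through the product $\bX_{\lambda_1}\cdots\bX_{\lambda_\ell}$. A straightforward induction on the number of commutations yields, for every $j\geq 0$ for which the process can be run,
\begin{equation*}
\bX_m\Slam=(-1)^j\,\bX_{\lambda_1-1}\cdots\bX_{\lambda_j-1}\,\bX_{m+j}\,\bX_{\lambda_{j+1}}\cdots\bX_{\lambda_\ell}\cdot 1.
\end{equation*}

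The vanishing case $m\in\Mlam$ is the easiest. Writing $m=\lambda_i-i$ and running the commutation to stage $j=i-1$, the moving index becomes $m+j=\lambda_i-1$, so the middle two factors are $\bX_{\lambda_i-1}\bX_{\lambda_i}$. Specializing \eqref{eq:XmXn} to adjacent indices (take $n=m+1$) produces the identity $\bX_a\bX_{a+1}=0$, which annihilates the entire expression.

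For $m\notin\Mlam$, the aim is to stop at the smallest $j$ satisfying $m+j\geq\lambda_{j+1}$. The minimality of $j$ gives $m+j-1<\lambda_j$, and since $m\neq\lambda_j-j$ this sharpens to $m+j\leq\lambda_j-1$. The inequalities $\lambda_j-1\geq m+j\geq\lambda_{j+1}$ make $\mu=(\lambda_1-1,\ldots,\lambda_j-1,m+j,\lambda_{j+1},\ldots)$ a partition, which is precisely $m\triangleright\lambda$; a second application of Proposition~\ref{prop:SlamXl} then rewrites the right-hand side as $(-1)^j S_{m\triangleright\lambda}$. Since $\Mlam=\{\lambda_i-i\}_{i\geq 1}$ is strictly decreasing, the condition $\lambda_{j+1}-(j+1)<m<\lambda_j-j$ identifies $j$ as exactly the number of Maya entries exceeding $m$, matching the claimed sign. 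The main bookkeeping subtlety is checking that the commutation need not run past the last factor $\bX_{\lambda_\ell}$: the tail $-i=\lambda_i-i\in\Mlam$ for $i>\ell$ shows that all integers $\leq-\ell-1$ lie in $\Mlam$, so $m\notin\Mlam$ forces $m\geq-\ell$ and hence $j\leq\ell$.
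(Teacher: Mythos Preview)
Your argument is correct and is exactly the route the paper indicates: the paper does not spell out a proof but presents Proposition~\ref{prop:XmSlam} as an immediate corollary of Proposition~\ref{prop:SlamXl} together with the relation~\eqref{eq:XmXn}, citing \cite{KM20}, and you have simply written out that ``immediate corollary'' in detail. (The one edge case not covered verbatim is $m=\lambda_i-i\in\Mlam$ with $i>\ell$, where the commutation runs through all $\ell$ factors and vanishing then follows from $\bX_{m+\ell}\cdot 1=B_{m+\ell}=0$; this is harmless.)
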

\noindent
By construction, the action of $\bV(\bt,z)$ on a polynomial
$P(\bt)\in \C[t_1,\ldots, t_n]$ is
{\small
\begin{equation}
  \label{eq:bVP}
  \bV(\bt,z) P(\bt) = \exp\lp \sum_{k=1}^\infty t_k z^k \rp P\!\lp t_1-
  z^{-1}, t_2 - \frac{z^{-2}}{2}, \ldots, t_n - \frac{z^{-n}}{n}\rp.
\end{equation}}
\noindent
Proposition~\ref{prop:XmSlam} allows the action of $\bV(\bt,z)$ on a Schur
polynomial to be conveniently written in terms of the ``insertion''
procedure $m\triangleright\lambda$:
\begin{theorem}
  Let $\lambda$ be a partition. With the above notation, we have
  \begin{equation}
    \label{eq:Vlam}
    \bV(\bt,z) \Slam(\bt) = \sum_{m\notin \Mlam}
    (-1)^{\# \{ k\in \Mlam : k> m \}} S_{m\triangleright\lambda}(\bt)  z^m.
  \end{equation}
\end{theorem}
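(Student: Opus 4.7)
The plan is to read off \eqref{eq:Vlam} directly from the Laurent expansion of $\bV$ in \eqref{eq:bVdef} combined with Proposition~\ref{prop:XmSlam}. Since $\bV(\bt,z) = \sum_{m\in\Z} \bX_m z^m$ as an operator-valued formal series, applying both sides to $\Slam$ yields
\[ \bV(\bt,z) \Slam(\bt) = \sum_{m\in\Z} (\bX_m \Slam)(\bt)\, z^m. \]
Now I would invoke \eqref{eq:Xmact}: each coefficient vanishes whenever $m\in\Mlam$, and otherwise equals $(-1)^{\#\{k\in\Mlam:k>m\}}\, S_{m\triangleright\lambda}$. Discarding the zero terms produces exactly the right-hand side of \eqref{eq:Vlam}.

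The only point worth explicitly verifying is that the resulting expression is a well-defined Laurent series, i.e.\ that only finitely many negative powers of $z$ appear. This is automatic: because $\lambda_i = 0$ for all sufficiently large $i$, the set $\Mlam = \{\lambda_i - i\}_{i\in\N}$ contains every sufficiently negative integer, so $\bX_m\Slam = 0$ for all $m \le -\ell-1$, where $\ell$ is the length of $\lambda$. Hence the sum in \eqref{eq:Vlam} is truncated from below, and it agrees with the Miwa-shifted polynomial $\bV(\bt,z)\Slam(\bt)$ computed directly via \eqref{eq:bVP}.

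There is no serious obstacle here; the theorem is essentially an immediate corollary of Proposition~\ref{prop:XmSlam}, and the only bookkeeping is to confirm that the sign $(-1)^{\#\{k\in\Mlam:k>m\}}$ and the insertion recipe $m\triangleright\lambda$ of \eqref{eq:lmlist} match between the two statements --- which they do by construction, since \eqref{eq:Vlam} is stated using exactly the conventions of the proposition it quotes.
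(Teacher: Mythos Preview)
Your proposal is correct and is precisely the argument the paper has in mind: the paper does not even write out a proof, but simply introduces the theorem with the remark that Proposition~\ref{prop:XmSlam} ``allows the action of $\bV(\bt,z)$ on a Schur polynomial to be conveniently written'' in this form, i.e.\ one expands $\bV=\sum_m \bX_m z^m$ and applies \eqref{eq:Xmact} termwise. Your additional observation that the series is truncated below at $m=-\ell$ is a helpful sanity check that the paper leaves implicit.
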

\section{The harmonic oscillator and its rational extensions}
\label{sect:ho}
\subsection{Hermite polynomials}
Hermite polynomials $\{H_n(x)\}_{n\in \Nz}$ are classical orthogonal
polynomials that satisfy the second-order eigenvalue equation
\begin{equation}
  \label{eq:hermde}
  y''-2xy' = 2n y,\quad y= H_n(x),
\end{equation}
and the orthogonality relation
\begin{equation}
  \label{eq:hortho}
  \int_{\R} H_m(x) H_n(x) e^{-x^2} dx = \sqrt{\pi}\, 2^n
  n! \delta_{n,m}.
\end{equation}
The above is equivalent  to the 3-term recurrence relation
\begin{equation}
  \label{eq:h3term}
  H_{n+1}(x) =  2x H_n(x) - 2n H_{n-1}(x),\quad H_0(x)=1
\end{equation}
The generating function for the Hermite polynomials is
\begin{align}
  \label{eq:hermgf}
    e^{xz - \tfrac14 z^2}
    &= \sum_{n=0}^\infty H_n(x) \frac{z^n}{2^nn!},
\end{align}
which can be readily established by observing that
\begin{equation}
  \label{eq:Dx2Dz}
  (\partial_x+ 2\partial_z) \lp e^{xz - \tfrac14z^2- x^2}\rp =
  (\partial_x+ 2\partial_z) e^{-(x-z/2)^2} =  0,
\end{equation}
and by applying the well-known Rodrigues formula
\begin{equation}
  \label{eq:HnRodrigues}
  H_n(x)=(-1)^ne^{x^2}\frac{d^n}{dx^n}e^{-x^2},\;  n\in \Nz
\end{equation}


Comparison of \eqref{eq:hermgf} with \eqref{eq:Bgf} shows that the
Hermite polynomials are specializations of Bell polynomials:
\[
  \begin{aligned}
    H_n(x) &= n! 2^n B_n(x,-\tfrac14,0,\ldots)   .\\
  \end{aligned}
\]
Applying \eqref{eq:Bksum} then gives the well-known formula
\[ H_n(x) = \sum_{j=0}^{\lfloor n/2\rfloor} (-1)^j \frac{n!}{(n-2j)!
    j!} (2x)^{n-2j}.\] 

In the sequel, we will also make use of the conjugate Hermite
polynomials:
\begin{equation}
  \label{eq:tHndef}
  \tH_n(x) = n! 2^n B_n(x,\tfrac14,0,\ldots) =i^{-n} H_n(i x),\quad n\in \Nz
\end{equation}

\subsection{The canonical Hamiltonian and coherent state}

Write $p= i \partial_x$, so that
\[ T(x,\partial_x) = p^2+x^2=- \partial_x^2 + x^2 \] is the
Hamiltonian of the quantum harmonic oscillator.  We say that a
function $\psi(z)$ is quasi-rational if its log-derivative,
$\psi'(z)/\psi(z)$, is a rational function.  The quasi-rational
eigenfunctions of $T$ are the Hermite functions
\begin{equation}
  \label{eq:psindef}
\psi_n(x)=\begin{cases}
e^{-\tfrac{x^2}{2}}H_n(x), & n\geq 0\\
e^{\tfrac{x^2}{2}}\tH_{-n-1}(x), & n<0.
\end{cases},
\end{equation}
We now show that the $\psi_n,\; n\ge 0$, represent the bound states of
the harmonic oscillator, while the $\psi_n,n<0$
represent virtual states.
Multiplication of \eqref{eq:hermgf} by $e^{-x^2/2}$ yields
the generating function for the bound states:
\begin{equation}
  \label{eq:psingf}
  \Psi_0(x,z) := e^{-\tfrac12(x-z)^2+\tfrac14z^2}= \sum_{n=0}^\infty \psi_n(x) \frac{z^n}{2^n n!}.
\end{equation}
By a direct calculation, we have
\begin{equation}
  \label{eq:TPsi0}
   T(x,\partial_x) \Psi_0(x,z) = (2 z \partial_z+1)\Psi_0(x,z). 
\end{equation}
Applying the above relation to \eqref{eq:psingf} and comparing the
coefficients of the resulting power series then returns the desired
eigenvalue relation
\begin{equation}
  \label{eq:Tpsin}
  T\psi_n = (2n+1) \psi_n,\quad n\in \Z.
\end{equation}

The classical ladder operators
\begin{equation}
  \label{eq:L+idef}
  \begin{aligned}
    L_{\mp}(x,\partial_x) := \partial_x \pm x
  \end{aligned}
\end{equation}
satisfy the intertwining relations
\[ T L_- = L_- (T-2),\qquad T L_+ = L_+(T+2).\]
An immediate consequence are the lowering and raising relations:
\begin{equation}
  \label{eq:lrrel}
    L_- \psi_n =
    \begin{cases}
      2n \psi_{n-1}, &n \ge 0\\
      \psi_{n-1}, & n<0      
    \end{cases}\qquad
    L_+ \psi_n =
    \begin{cases}
      \psi_{n+1},& n> 0\\
      2(n+1) \psi_{n+1} & n\le 0      
    \end{cases}
\end{equation}

Now define  the canonical coherent state (CCS) to be
\begin{equation}
  \label{eq:Phi0def}
  \Phi_0(x,t;\alpha) :=
 e^{-it} \Psi_0(x,\alpha e^{-2it}).
\end{equation}
The change of variable $z=\alpha e^{-2it}$ transforms \eqref{eq:Dx2Dz}  and
 \eqref{eq:TPsi0} into
\begin{equation}
  \begin{aligned}
  L_- \Phi_0 &= \alpha e^{-2it}  \Phi_0\\
  T \Phi_0 &= i \partial_t\Phi_0.
  \end{aligned}
\end{equation}
Thus, the CCS is an eigenfunction of $L_{-}$ and an exact solution to
the time-dependent Schr\"odinger equation.

\subsection{Hermite Pseudo-Wronskians}
Let $M\in \cM$ be a Maya diagram and $K=\{k_1,\ldots, k_p\}$ the
corresponding index set arranged in increasing order
$k_1<\cdots k_q<0 \le k_{q+1} < \cdots k_p$. Define the
pseudo-Wronskian
\begin{equation}\label{eq:pWdef}
  \HM =
  \det \begin{vmatrix}
    \tH_{-k_1-1} & \tH_{-k_1} & \ldots & \tH_{-k_1+p-1}\\
    \vdots & \vdots & \ddots & \vdots\\
    \tH_{-k_q-1} & \tH_{-k_q} & \ldots & \tH_{-k_q+p-1}\\
    H_{k_{q+1}} & D_x H_{k_{q+1}} & \ldots & D_x^{p-1}H_{k_{q+1}}\\
    \vdots & \vdots & \ddots & \vdots\\
    H_{k_p} & D_x H_{k_p} & \ldots & D_x^{p-1}H_{k_p}
  \end{vmatrix}.
\end{equation}
One can show \cite{GGM18} that the normalized polynomial
\begin{equation}
  \label{eq:hHdef}
\hH_M=\frac{(-1)^{(p-q)q}H_M}{\prod_{i<j\le q}2(k_j-k_i)\prod_{q+1\le i<j}2(k_j-k_i)}
\end{equation}
is translation-invariant
\begin{equation}
  \label{eq:HM+n}
  \hH_M=\hH_{M+n},\quad n\in\Z.
\end{equation}
and hence may be regarded as a function of the corresponding partition
$\lambda$.  Moreover, in \cite{KM20}, it was shown that the normalized
Hermite pseudo-Wronskian \eqref{eq:hHdef} has the following expression
in terms of Schur functions:
\begin{equation}
  \label{eq:HMSlam}
  \hH_M(x) =  \frac{2^{N}N!}{d_\lambda} \, \Slam(x,-\tfrac14,0,\ldots ),
\end{equation}
where $N=|\lambda|$ and where $d_\lambda$ is the combinatorial factor
defined in \eqref{eq:dlamdef}.

\subsection{ Rational Extensions of the Harmonic Oscillator}
Let $M\in \cM$ be a Maya diagram. The
pseudo-Wronskian defined in \eqref{eq:pWdef} can now be expressed  \cite{GGM18}
simply as
\begin{equation}
  \label{eq:HM}
  H_M(x)=e^{\sigma_M\tfrac{x^2}{2}}\Wr[\psi_{k_1}(x),\dots,\psi_{k_p}(x)],
\end{equation}
where $\psi_n(x),\; n\in \Z$, are the quasi-rational eigenfunctions
\eqref{eq:psindef}, and where $\sigma_M$ is the index of $M$.  The
potential
\begin{align}
\label{rationalext}
  U_M(x)
  &=x^2-2\frac{d^2}{dx^2}\log\Wr[\psi_{k_1},\dots,\psi_{k_p}]\\ \nonumber
  &=x^2+2\left(\frac{H'_M}{H_M}\right)^2-\frac{2H_M^{''}}{H_M}-2\sigma_M   
\end{align}
is a rational extension of the harmonic oscillator potential, so
called because the terms following the $x^2$ in~\eqref{rationalext}
are all rational.  The corresponding Hamiltonian 
\begin{equation}
  T_M:=-\frac{d^2}{dx^2}+U_M
\end{equation}
is  exactly solvable \cite{GGM13} with eigenfunctions
\begin{align}
\label{eigenstates}
  \psi_{M,m}=
  e^{\epsilon_M(m)\tfrac{x^2}{2}}\frac{\hH_{M,m}}{\hH_M},\quad
  \epsilon_M(m) =
  \begin{cases}
    -1 & \text{ if } m\notin M\\
    +1 & \text{ if } m\in M\\
  \end{cases},\quad m\in \Z
\end{align}
where $(M,m):=\flip_m(M)$. The eigenvalue relation is
\begin{equation}
  \label{eq:TMpsiMn}
  T_M\psi_{M,m}=(2m+1)\psi_{M,m},\quad m\in \Z.
\end{equation}
The numerators $\hH_{M,m}(x),\; m\notin M$ are known as
exceptional Hermite polynomials \cite{GGM13}.  Relation
\eqref{eq:HM+n} implies that $T_M$ and the corresponding
eigenfunctions are translation covariant:
\begin{equation}
    T_{M+n}=T_M+2n,\quad     \psi_{M+n,m+n}= \psi_{M,n}
,\; n\in\Z
\end{equation}
Thus, the unlabelled Maya diagram is a representation of the spectrum, with
\begin{equation}
  \label{eq:IMdef}
  I_M := \Z\setminus M = (\Z\setminus \Mlam) + \sigma_M,
\end{equation}
serving as the index set for the bound states (the ones with label
$\emptybox$).

As regards regularity, it should be noted that by the Krein-Adler
theorem \cite{GGM13}, $H_M$ has no real zeros if and only if all
finite $\filledbox$ segments of $M$ have even size. It is precisely
for such $M$ that $T_M$ corresponds to a self-adjoint operator and
that the eigenfunctions $\psi_{M,m},\; m\in I_M$ are
square-integrable. If this condition fails, then one still has
orthogonality and self-adjointness if one deforms the contour of
integration away from the singularities \cite{HHV16}.  However, in the
presence of singularities in $U_M$, the resulting inner product is no
longer positive-definite, but rather has a finite signature.

The generating function for the bound states of a rational extension
can be given using the Miwa shift formula  \eqref{eq:bVP}.
\begin{proposition}
For a partition $\lambda$, define
\begin{equation}
  \label{eq:Psilamdef}
  \Psilam(x,z) = \frac{\Slam\lp x-z^{-1}, -\tfrac14
    -\tfrac12z^{-2},-\tfrac13 z^{-3},\ldots\rp}{\Slam(x,-\tfrac14,0,\ldots)} \Psi_0(x,z).
\end{equation}
Let $M\in \cM$ be a Maya diagram and $\lambda$ the corresponding
partition. Then,
\begin{equation}
  \label{eq:Psilamgf}
  \Psilam(x,z) = \sum_{m\in I_M}  \psi_{M,m}(x)
  \frac{\prod_{i=1}^\ell (m-m_i)}{(m-\sigma_M+\ell)!}
  \lp\frac{z}{2}\rp^{m-\sigma_M}, 
\end{equation}
where $m_1>m_2>\cdots$ is the decreasing enumeration of $M$.
\end{proposition}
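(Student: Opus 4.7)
The plan is to specialize the vertex-operator identity \eqref{eq:Vlam} at $\bt=(x,-\tfrac14,0,\ldots)$ and invoke the Miwa shift formula \eqref{eq:bVP} to transform it into the desired expansion of $\Psilam(x,z)$. At this specialization, \eqref{eq:bVP} gives
\[
\bV(\bt,z)\Slam(\bt)=e^{xz-z^2/4}\,\Slam(x-z^{-1},-\tfrac14-\tfrac12z^{-2},-\tfrac13z^{-3},\ldots),
\]
and since $e^{-x^2/2}e^{xz-z^2/4}=\Psi_0(x,z)$, dividing \eqref{eq:Vlam} by $\Slam(x,-\tfrac14,0,\ldots)$ and multiplying by $e^{-x^2/2}$ produces
\[
\Psilam(x,z)=\frac{e^{-x^2/2}}{\Slam(x,-\tfrac14,0,\ldots)}\sum_{m\notin\Mlam}(-1)^{\#\{k\in\Mlam\colon k>m\}}S_{m\triangleright\lambda}(x,-\tfrac14,0,\ldots)\,z^m.
\]

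Next I would convert each Schur evaluation into a normalized pseudo-Wronskian via \eqref{eq:HMSlam}, using $|m\triangleright\lambda|=N+m$; this introduces a factor $2^m$ in the denominator and turns $z^m$ into $(z/2)^m$. The partition $m\triangleright\lambda$ has Maya diagram $(\Mlam\cup\{m\})-1$, which is the index-zero normalization of $\flip_{m+\sigma_M}(M)$, so translation invariance \eqref{eq:HM+n} allows me to identify $e^{-x^2/2}\hH_{m\triangleright\lambda}/\hH_M$ with the bound state $\psi_{M,m+\sigma_M}$ of \eqref{eigenstates}, where $m+\sigma_M\in I_M$ since $m\notin\Mlam$. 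After the relabelling $m\mapsto m+\sigma_M$, the exponent $(z/2)^m$ becomes $(z/2)^{m-\sigma_M}$ as desired.

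The main obstacle is the scalar combinatorial identity
\[
(-1)^{\#\{k\in\Mlam\colon k>m\}}\,\frac{N!\,d_{m\triangleright\lambda}}{d_\lambda\,(N+m)!}=\frac{\prod_{i=1}^\ell(m+i-\lambda_i)}{(m+\ell)!},
\]
which, after the relabelling above and the substitution $m_i=\lambda_i-i+\sigma_M$, becomes the coefficient in \eqref{eq:Psilamgf}. To prove it I would use the index-set form $N!/d_\lambda=\prod_i k_i!/\prod_{i<j}(k_i-k_j)$ with $k_i=\lambda_i-i+\ell$, together with the identity $K_{m\triangleright\lambda}=K_\lambda\cup\{m+\ell\}$ implied by the Maya-diagram description. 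The factorial ratio then telescopes to $1/(m+\ell)!$, while the Vandermonde ratio equals $\prod_{k\in K_\lambda}|k-(m+\ell)|=\prod_{i=1}^\ell|\lambda_i-i-m|$. Removing the absolute values costs a sign $(-1)^{\ell-\#\{k\in\Mlam\colon k>m\}}$, which combines with the $(-1)^\ell$ from rewriting $\prod_i(\lambda_i-i-m)=(-1)^\ell\prod_i(m+i-\lambda_i)$ to cancel the leading sign factor. The bookkeeping is elementary, but the sign tracking is the delicate part.
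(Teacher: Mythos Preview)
Your approach is correct and is exactly the route the paper intends: the one-line proof in the paper cites only \eqref{eq:bVdef}, \eqref{eq:Xmact} and \eqref{eq:bVP}, which together amount to specializing \eqref{eq:Vlam} at $\bt=(x,-\tfrac14,0,\ldots)$ via the Miwa shift, precisely as you do. You have in fact supplied considerably more detail than the paper---the identification of $S_{m\triangleright\lambda}$ with $\hH_{M,m+\sigma_M}$ through \eqref{eq:HMSlam} and \eqref{eq:HM+n}, and the hook-length verification of the scalar coefficient---all of which the paper leaves implicit; your sign bookkeeping checks out.
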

\begin{proof}
  This follows from \eqref{eq:bVdef}, \eqref{eq:Xmact} and \eqref{eq:bVP}.
\end{proof}

\noindent
Observe that if $M=\Mz$ is the trivial Maya diagram, then
\eqref{eq:Psilamgf} reduces to the classical generating function shown
in \eqref{eq:psingf}.


\section{Extended coherent states}
\label{sect:ECS}
\subsection{Ladder Operators}
Let $T,A$ be differential operator.  We say that $A$ is a ladder
operator for $T$ if 
\begin{equation}
  \label{eq:ladderdef}
  [A,T] = \lambda A
\end{equation}
for some constant $\lambda$.  As a consequence of the definition, $A$
acts on eigenfunctions of $T$ by a spectral shift $\lambda$, possibly
annihilating finitely many eigenfunctions.  More generally, we say
that $A$ intertwines $T_1, T_2$ if $AT_1=T_2A$.  Thus,
\eqref{eq:ladderdef} is a special case of an intertwining relation
with $T_1=T,\; T_2=T+\lambda$.

In \cite{GGMM20} it was shown that, within the class of rational
extensions, the basic intertwiner between
$T_{M_1},T_{M_2},\; M_1,M_2\in \cM$ takes the form
\begin{equation}
  \label{eq:AMKdef}
  A_{M_1,K}[y]=\frac{\Wr[\psi_{M,k_1},\dots,\psi_{M,k_p},y]}{\Wr[\psi_{M,k_1},\dots,\psi_{M,k_p}]}, 
\end{equation}
where $K= M_1 \ominus M_2=\{k_1,\ldots, k_p\}$ is the index set of the
corresponding multi-flip $\flip_K$ that connects $M_1 \to M_2$, and
where the $\psi_{M,m},\; m\in \Z$ are the quasi-rational
eigenfunctions of $T_M$ defined in \eqref{eigenstates}.  One can show
that $A_{M,K}$ is a monic differential operator of order $p$ such that
\[ A_{M_1,K} T_{M_1} = T_{M_2} A_{M_1,K}.\] Operators $T_{M}$ and
intertwiners $A_{M,K}$ have the abstract structure of a category \cite{GGMM20}
because intertwiners $A_{M_1,K_1}$ and $A_{M_2,K_2}$ where
$K_1 = M_2\ominus M_1$  obey the following composition relation
\begin{equation}
\label{comp}
A_{M_2,K_2}\circ A_{M_1,K_1}=A_{M_1,K_{1}\ominus K_2}\circ p_{K_1,K_2}(T_M),
\end{equation}
where
\[
  \begin{aligned}
    p_{K_1,K_2}(m) &= \prod_{k\in K_1\cap K_2} (2k+1-m).
\end{aligned}
\]
Since $T_{M+n}=T_M+2n$, the above intertwiners
are also translation-invariant:
\begin{equation}
\label{translation}
A_{M+n,K+n}=A_{M,K},\quad n\in\Z.
\end{equation}
This allows us to consider a quotient category whose objects are
rational extensions modulo spectral shifts, and where the ladder
operators are precisely the endomorphisms.  For the details, see
Section 4 of \cite{GGMM20}.

For a Maya diagram $M\in \cM$ and an integer $n\in \Z$,  let
\begin{equation}
  \label{eq:LMndef}
  \LMn := A_{M,(M+n) \ominus M}.
\end{equation}
By Theorem 4.1 of \cite{GGMM20},
\begin{equation}
\label{ladder}
\LMn T_M=(T_M+2n)\LMn.
\end{equation}
Thus, $\LMn$ is a \textit{ladder operator} for the rational extension
$T_M$.  The action of $\LMn$ is that of a lowering or raising operator
according to
\[ L_n[\psi_{M,k}] = C_{M,n,k} \psi_{M,k-n},\quad k\notin M, \] where
$C_{M,n,k}$ is zero if $\psi_{M,k-n}$ is not a bound state, i.e., if
$k-n\in M$. Otherwise, $C_{M,n,k}$ is a rational number whose explicit
form is given in \cite{GGMM20}.  In general, the ladder operators
$L_{M,q},\; q\in \Z$ do not commute.  However, as we now show, there
is a natural subalgebra generated by lowering operators of certain
critical degrees $q$ that does commute.

\subsection{The annihilator algebra}
We say that a $q$th order ladder operator is an annihilator, if its
kernel is spanned by $q$ bound states.  The annihilator algebra of a
rational extension is more complicated than in the canonical case,
where the annihilator algebra is generated $L_-=\partial_x+x$. For a
non-empty partition, the analogous operators generate a non-trivial
algebra of commuting operators with a structure determined by the
combinatorics of the corresponding Maya diagram, as we now show.

For $q\in \N$, we say that a Maya diagram $M\in \cM$ is a $q$-core if
$M\subset M+q$.  We say that $q\in \N$ is a critical degree of a Maya
diagram $M\in \cM$ if $M$ is a $q$-core.  Observe that if $q$ is a
critical degree of $M$, then $q$ is a critical degree of $M+n$ for
every $n\in \N$.  Thus, the $q$-core property is an attribute of an
unlabelled Maya diagram.  The set of unlabelled Maya diagrams is
naturally bijective to the set of partitions, and so we use $\Dlam$,
where $\lambda$ is the partition corresponding to $M$, to denote the
set of all critical degrees. This definition is consistent with the
definition of the $q$-core partition used in combinatorics; see
\cite{Mc98} for more details.

A $q\in \N$ fails to be in $\Dlam$ if and only if there exists an
$m\in M$ and a $k\in I_M$ such that $q=m-k$.  The smallest empty
position on a Maya diagram occurs at position
$m_{\ell+1}+1 = \sigma_M-\ell$, while the largest occupied position
occurs at $m_1= \lambda_1-1+\sigma_M$.  It then follows that
\begin{equation}
  \label{eq:qcdef}
  q_c := m_1 - (\sigma_M-\ell)+1 =  \lambda_1+\ell
\end{equation}
is a threshold critical degree, in the sense that $q\in \Dlam$ for all
$ q\ge q_c$ and $q_c-1\notin \Dlam$.  See Figure
\ref{fig:lowering} for an example.

Let $K_q = (M+q)\ominus M,\; q\in \Z$ so that, by \eqref{eq:LMndef}
and \eqref{eq:AMKdef}, the kernel of $\LM{q}$ is spanned by
$\psi_{M,k},\; k\in K_q$.  By \eqref{eigenstates},
$L_{M,q},\; q\in \Z$ is an annihilator if and only if
$K_q\subset I_M$, if and only if $K_q = (M+q)\setminus M$, and if and
only if $M\subset M+q$.  Therefore, $L_{M,q}$ is an annihilator if and
only if $q\in \Dlam$ is a critical degree.

By Theorem 6.1 of \cite{KM20}, for every
critical degree $q\in \Dlam$, we have
\begin{equation}
  \label{eq:LqPsiz}
  \LM{q}(x,\partial_x) \Psilam(x,z) = z^q \Psilam(x,z).
\end{equation}
In other words, the generating function \eqref{eq:Psilamgf} is a joint
eigenfunction of the annihilators.  Let
$\Rlam =\lspan \{ z^q \colon q\in \Dlam\}$, and observe that if
$q_1, q_2\in \Dlam$, then $q_1+q_2\in \Dlam$ also.  It follows that
$\Rlam$ is closed with respect to multiplication; i.e$.$ $\Rlam$ is a
commutative algebra.  Also note that composition of annihilation
operators on the left of \eqref{eq:LqPsiz} is equivalent to
multiplication of eigenvalues on the right. It follows that the
annihilators commute, and that $\Rlam$ is isomorphic to the
annihilator algebra associated with the rational extension $T_M$.

Relations \eqref{eq:Psilamgf} and \eqref{eq:LqPsiz}
entail the following action of the annihilators on the bound states:
\begin{equation}
  \label{eq:Lqpsim}
  \LM{q}(x,\partial_x) \psi_{M,m}(x) = 2^q
  \gamma_{M,q}(m)\psi_{M,m-q}(x),
\end{equation}
where $m\in I_M, q\in \Dlam$, and where
\[ \gamma_{M,q}(m) = \prod_{k\in K_q} (m-k).\] Note that
$\gamma_{M,q}(m)=0$ when $\psi_{M,m}$ is a bound state, but
$\psi_{M,m-q}$ is not.

\subsection{Definition of the extended coherent states}
We now construct the ECS corresponding to a rational extension
$T_M,\; M\in \cM$.  We proceed, as in the canonical case, by
constructing the coherent state in terms of the generating
function. In \cite{KM20}, it was shown that, in terms of the
generating function $\Psilam(x,z)$, the eigenvalue relation
\eqref{eq:TMpsiMn} is equivalent to
\begin{equation}
  \label{eq:TMpsilam}
  T_M(x,\partial_x) \Psilam(x,z) = (z\partial_z + 1 + 2\sigma_M) \Psilam(x,z).
\end{equation}
Using the same change of variables as in \eqref{eq:Phi0def}, let us
therefore set
\begin{equation}
  \label{eq:Philamdef}
 \Philam(x,t;\alpha) = e^{-(1+2\sigma_M)it}\Psilam(x,\alpha e^{-2 it }).
\end{equation}
Then by construction, $\Philam(x,t)$ is an exact solution of the
time-dependent Schr\"odinger equation corresponding to the rational
extension $T_M$:
\begin{equation}
  \label{eq:PhilamTM}
    i \partial_t \Philam(x,t) = T_M(x,\partial_x) \Philam(x,t) .
\end{equation}
Applying the same change of variables to \eqref{eq:LqPsiz}, we obtain
the annihilator eigenrelation
\begin{equation}
  \label{eq:LqPhial}
  \LM{q}(x,\partial_x) \Philam(x,t;\alpha) = \alpha^q e^{-2iqt} \Philam(x,z).
\end{equation}
Hence, $\Philam(x,t;\alpha)$ is a joint eigenfunction of the
annihilator algebra and satisfies the definition of a
coherent state in the sense of Barut-Girardello \cite{BG71}.

\subsection{Minimized uncertainty}
The canonical coherent state $\Phi_0(x,t;\alpha)$ saturates the
Heisenberg uncertainty relation for position and momentum.  Formally,
we have
\[ E(\Delta x)^2 E(\Delta p)^2 = \frac14 \]
where
\begin{equation}
  \label{eq:canonvars}
\begin{aligned}
  E(\Delta x)^2 &= \frac{\int_\R  x^2
                 \Phi_0\overline{\Phi_0}dx}{
                 \int_\R \Phi_0   \overline{\Phi_0} dx}  -
  \lp\frac{\int_\R x\Phi_0 \overline{\Phi_0}  dx}{\int_\R \Phi_0
  \overline{\Phi_0} dx}\rp^2 \\
  E(\Delta p)^2 &= -\frac{\int_\R (\partial_{xx}
                 \Phi_0)\overline{\Phi_0}dx}{
                 \int_\R \Phi_0   \overline{\Phi_0} dx}  - 
                 \lp\frac{\int_\R i(\partial_x\Phi_0) \overline{\Phi_0}
                 dx}{
                 \int_\R \Phi_0 \overline{\Phi_0} dx}\rp^2
\end{aligned}
\end{equation}

Without loss of generality, $M=\Mlam$, whence by \eqref{eq:Psilamdef} and the definition
\eqref{eq:Philamdef} we see that $\Philam \to \Phi_0$ as
$\alpha\to +\infty$. Consequently for an ECS, the minimized
uncertainty relation holds asymptotically, in the sense that
\begin{equation}
  \label{eq:minunlam}
   E_\lambda(\Delta x)^2 E_\lambda(\Delta p)^2 \to \frac14,\quad
   \text{as }
   \alpha \to + \infty, 
\end{equation}
where $E_\lambda(\Delta x), E_\lambda(\Delta p)$ denote the
expectation values of the variances associated with the wave function
$\Philam(x,t;\alpha)$. Formally, these are defined in the same way as
\eqref{eq:canonvars}, but with $\Philam$ in place of $\Phi_0$.

\subsection{Example} As an example, we construct the coherent state
corresponding to the index set $K=\{2,3\}$.  The corresponding Maya diagram, partition, and index are
\[ M= f_K(\Mz) = \{ \ldots, -2,-1, 2,3 \},\quad \lambda=(2,2),\quad
  \sigma_M=2,\]
while the corresponding rational extension is
\[ T_M(x,\partial_x) =-\partial_x^2+ \left(x^2+4+\frac{32 x^2}{4
      x^4+3}-\frac{384 x^2}{\left(4 x^4+3\right)^2}\right). \]
The bound states are indexed by
\[ I_M = \Z\setminus M = \{ 0,1,4,5,6,\ldots \} , \] and the bound state
with eigenvalue $2m+1,\; m\in I_M$ given by
\[ \psi_{M,m} =  e^{-\tfrac12x^2} \frac{\hH_{M,m}(x)}{4x^4+3}
,\quad m\in I_M,\] where the corresponding exceptional polynomial is
\[ \hH_{M,m} = \frac{\Wr(2x^2-1,2x^3-3,H_m)}{4 (m-2)(m-3)},\quad m\ge
  0,\; m\neq 2,3. \]

In this case, the Schur function is
\[ \Slam(t_1,t_2,t_3) = \frac{t_1^4}{12}+t_2^2-t_1 t_3.\] Using
\eqref{eq:Psilamdef}, the generating function for the bound states is
therefore
\begin{equation}
  \label{eq:psilamex}
 \Psilam(x,z) = \lp 1- \frac{16x^3}{4x^4+3} z^{-1}+
  \frac{12(2x^2+1)}{4x^4+3} z^{-2} \rp
  e^{-\tfrac12(x-z)^2+\tfrac14z^2}.
\end{equation}

The set of critical degrees is $\Dlam = \{ 4,5,\ldots \}$. Note that there are no critical degrees below the threshold $q_c=2+2=4$.
Figure \ref{fig:lowering} illustrates the fact that $q=4$ is a
critical degree and that $q=3$ fails to be a critical degree since
$0+3\in M$ but $0\notin M$.

The extended coherent state  
\[ \Philam(x,t;\alpha) = e^{-5 it} \Psilam(x,\alpha e^{-2it}) \] is an
exact solution of the corresponding time-dependent Schr\"odinger
equation \eqref{eq:PhilamTM}.  The first 4 annihilators, as defined in
\eqref{eq:AMKdef}, are $\LM{q}=A_{M,K_q},\; q\in \{4,5,6,7\}$ with
\[ K_4 =\{0,1,6,7\},\; K_5=\{0,1,4,7,8\},\; K_6=\{0,1,4,5,8,9\},\;
  K_7=\{0,1,4,5,6,9,10\}. \]
These commuting differential operators generate the annihilator
algebra of this rational extension.  In each case, one can verify by
direct calculation that $\Philam(x,t;\alpha)$ is an eigenfunction of
$\LM{q}$ with eigenvalue $\alpha^q e^{-2qi t}$.

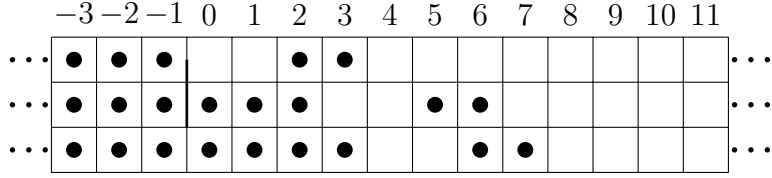
\begin{figure}[h]
	\centering
	\begin{tikzpicture}[scale=0.6]
	
	\path [fill] (0.5,2.5) node {\huge ...}
	++(1,0) circle (5pt) ++(1,0) circle (5pt)  ++(1,0) circle (5pt)
	++(3,0) circle (5pt) ++(1,0) circle (5pt)
	++(9,0) node {\huge ...} +(1,0);

	\foreach \x in {-3,...,11} 	\draw (\x+4.5,3.5)  node {$\x$};
	\path (6.5,0.5) node {} ++ (1,0) node {}
	++ (2,0) node {}++ (2,0) node {}++ (3,0) node {}
	;
	
	\draw  (1,0) grid +(15 ,3);
	
	\path [fill] (0.5,1.5) node {\huge ...}
	++(1,0) circle (5pt)
        ++(1,0) circle (5pt)
	++(1,0) circle (5pt)
	++(1,0) circle (5pt)
	++(1,0) circle (5pt)
	++(1,0) circle (5pt)
	++(3,0) circle (5pt) ++(1,0) circle (5pt)
	++(6,0) node {\huge ...} ;
	\path [fill] (0.5,0.5) node {\huge ...}
	++(1,0) circle (5pt) ++(1,0) circle (5pt)  ++(1,0) circle (5pt)
	++(1,0) circle (5pt)
	++(1,0) circle (5pt)
	++(1,0) circle (5pt)
	++(1,0) circle (5pt)
	++(3,0) circle (5pt) ++(1,0) circle (5pt)
	++(5,0) node {\huge ...} +(1,0);
	
	\draw[line width=1pt] (4,1) -- ++ (0,1.5);
	
	\end{tikzpicture}
	\caption{Top: The Maya diagram $M$ corresponding to index set
          $K=\{2,3\}$.  The corresponding partition and index are
          $\lambda=(2,2)$ and $\sigma_M=2$, respectively, while the threshold
          critical degree is $q_c = 4$. Middle: $M+3$. Bottom:
          $M+4$. Note that $4$ is a critical degree since
          $M\subset M+4$. However, $3$ fails to be a critical degree
          since $3\in M$ but $3\notin M+3$. }
        \label{fig:lowering}
\end{figure}

The form of \eqref{eq:psilamex} makes evident the asymptotic relation
$\Psilam \to \Psi_0$ as $\alpha=|z|\to +\infty$.  Figure
\ref{fig:uncert} shows the value of the position-momentum uncertainty
value $E_\lambda(\Delta x)^2 E_\lambda(\Delta p)^2$ as a function of
time $t$ for values $\alpha=4,8,16$.  The graphs clearly indicate the
corresponding asymptotic minimization of the uncertainty relation as
$\alpha\to +\infty$.

\begin{figure}
  \centering
  \includegraphics[width=8cm]{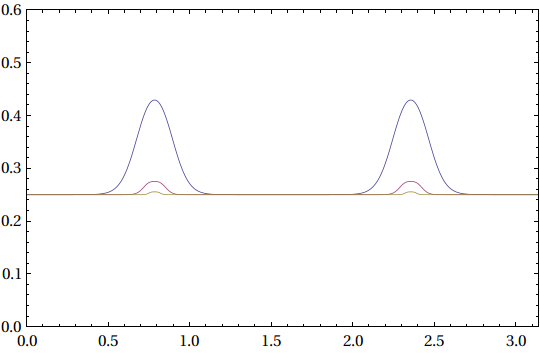}
  \caption{The position-momentum uncertainty value
    $E_\lambda(\Delta x)^2 E_\lambda(\Delta p)^2$ as a function of
    time for the ECS with $\lambda=(2,2)$ and $\alpha=4$ (blue), $8$ (red), and $16$ (yellow).}
  \label{fig:uncert}
\end{figure}

\end{document}